\documentclass[12pt,oneside,reqno]{amsart}
\usepackage[T1]{fontenc}
\usepackage{amssymb}
\usepackage{enumerate}
\usepackage{array}
\usepackage[utf8]{inputenc}
\usepackage[english]{babel}

\usepackage{booktabs}
\usepackage{caption}

  \usepackage[labelfont=md]{subcaption}
\usepackage[a4paper,tmargin=1in,bmargin=1.2in ,lmargin=1in,rmargin=1in,headheight=0in,headsep=0in,footskip=1.5\baselineskip]{geometry}
\usepackage{natbib,nicefrac,setspace,tabularx,graphicx,tikz,float,mathtools,bbm,paralist}

\usepackage{amsmath}
\numberwithin{equation}{section}

\usepackage{tikz}
\usepackage{calc}
\usepackage{pgfplots}
\usepackage{adjustbox}
\pgfplotsset{compat=1.7}
\usepackage{verbatim}

% hyperref, loaded last
\usepackage[colorlinks=true,linkcolor=blue,citecolor=blue]{hyperref}

%Page Layout, spacing, etc.
\pagestyle{plain}
\setcounter{secnumdepth}{3}
\setcounter{tocdepth}{2}
\onehalfspacing
%\allowdisplaybreaks

% For multiple footnotes
\let\oldFootnote\footnote
\newcommand\nextToken\relax

\renewcommand\footnote[1]{%
    \oldFootnote{#1}\futurelet\nextToken\isFootnote}

\newcommand\isFootnote{%
    \ifx\footnote\nextToken\textsuperscript{,}\fi}

% Subfigure numbering

\newtheorem{theorem}{Theorem}
\newtheorem*{theorem*}{Theorem}
\newtheorem{lemma}{Lemma}
\newtheorem{proposition}{Proposition}
\newtheorem{claim}[theorem]{Claim}
\newtheorem{corollary}[theorem]{Corollary}

\theoremstyle{definition}

\newtheorem*{definition*}{Definition}
\newtheorem*{lemma*}{Lemma}

\newcommand{\eps}{\varepsilon}

\newcommand{\E}{\mathbb{E}}
\newcommand{\Prob}{\mathbb{P}}
\DeclareMathOperator*{\argmax}{arg\,max}

\def\ee{\mathrm{e}}

% Citation with apostrophe (Require Natbib Pacakge)

\begin{document}

\title[]{The Emergence of Fads in a Changing World}

\author[\dagger]{Wanying Huang$^{\dagger}$}
\thanks{$^{\dagger}$Monash University. Email: \texttt{kate.huang@monash.edu}}
%    Address of record for the research reported here
\address{}

%    Information for third author
\thanks{I am indebted to Omer Tamuz for his continued support and encouragement. I am grateful (in alphabetical order) to Krishna Dasaratha, Sumit Goel, Wade Hann-Caruthers, Pawel Janas, Aldo Lucia, Thomas Palfrey, Luciano Pomatto and Jean-Laurent Rosenthal for their helpful comments. All errors are my own.}
\date{\today}

\maketitle

\begin{abstract}
We study how fads emerge under social learning in a changing environment. We consider a simple sequential social learning model where rational agents arrive in order, each acting only once, and the underlying unknown state constantly evolves. Each agent receives a private signal, observes all past actions of others, and chooses an action to match the current state. Because the state changes over time, cascades cannot last forever, and actions also fluctuate. We show that despite the rise of temporary information cascades, in the long run, actions change more often than the state. This result provides a theoretical foundation for faddish behavior in which people often change their actions more frequently than necessary.
\end{abstract}

\section{Introduction}
The term ``fad'' refers to transient behavior that rapidly rises and fades in popularity. Specifically, these rapid behavioral shifts cannot be fully attributed to changes in the underlying fundamentals. For example, in macroeconomics, there are boom-and-bust business cycles that cannot be explained by changes in the underlying economy.\footnote{See a recent study by \cite{schaal2021herding} modeling business cycles through the lens of social learning.} Similarly, in finance, it has long been documented that price deviations from an asset's intrinsic value can stem from speculative bubbles and fads \citep*{camerer1989bubbles, aggarwal1990fads}. Although the phenomenon of fads is widely observed in many economic activities, the question of how and why they emerge remains unresolved. In this paper, we model fads as excessive behavioral fluctuations relative to changes in the underlying environment and demonstrate how they can emerge under social learning in a dynamic setting.

The pioneering work in the social learning literature  \citep*[][hereafter referred to as BHW]{banerjee1992simple, BichHirshWelch:92} shows that, under appropriate conditions, an \emph{information cascade} will always occur. This is the event in which social information overwhelms agents' private information, leading them to follow the actions of their predecessors even if their own private information suggests otherwise. At the same time, because these cascades are typically formed based on limited information, they are fragile to small shocks over time. This fragility of cascades is briefly discussed in BHW, where it is shown that the possibility of a one-time change in the underlying environment can lead to ``\emph{seemingly whimsical swings in mass behavior without obvious external stimulus}''\textemdash a phenomenon they refer to as fads. Inspired by this idea, we formally define fads and study their long-term behavior. 

While BHW present an early idea of a fad, they mainly focus on learning in a fixed environment where fads cannot recur indefinitely. In contrast, the recurrence of fads is possible in a changing environment, a setting that has recently attracted some attention \citep*[see, e.g.,][]{dasaratha2020learning, levy2022stationary}. Indeed, this is an important setting to study, as many applications of social learning\textemdash such as investment, employment, cultural norms, and technological advancement\textemdash often operate in a dynamic environment. 

In this paper, we study a canonical model of social learning (i.e. a binary
state with symmetric and informative binary signals) with a slight twist: in each period, the underlying state switches with a small (and symmetric) probability.\footnote{This is also known as a simple two-state Markov process. See other studies of social learning with a Markovian state, e.g.,  \citet*{moscarini1998social, hirshleifer2002economic} and \citet*{levy2022stationary}. Our model is mostly similar to that in \citet*{moscarini1998social} except for the tie-breaking rule.} We focus on the long-term behavior of agents, who arrive sequentially and learn from observing the past actions of others as well as their private signals. Each agent acts once and obtains a positive
payoff if their action matches the current state. As the underlying state evolves, the optimal action fluctuates as well. The questions we aim to address are: How frequently do actions change? And more specifically, do they change more or less often compared to the underlying state?

Note that in this dynamic environment, the occurrence of information cascades is no longer guaranteed. Intuitively, if the underlying state changes too frequently, the social information extracted from past actions contains little information about the current state, as it primarily reflects the previous state, which is likely to be different from the current one. Indeed, as shown in \cite*{moscarini1998social}, information cascades occur only if the state changes relatively slowly, and even then, they can only last temporarily. This is because, once a cascade forms, there is no new inflow of information. Consequently, the social information on which the cascade was built depreciates over time and it becomes less relevant to the current agent. Eventually, this social information becomes weak enough so that agents will revert to using their private signals and adjust their actions accordingly. Thus, unlike in a static environment where cascades last indefinitely, when the state evolves slowly, cascades can still arise but they cannot persist indefinitely.

We henceforth focus on the setting in which the state evolves slowly so that temporary cascades can arise; otherwise, all action fluctuations are driven solely by fluctuations in private signals, which are clearly more volatile than the state. In this setting, the question of whether actions change more or less frequently than the state is no longer clear. On the one hand, agents sometimes ignore their private signals due to cascades and therefore do not change their actions even when the state changes.\footnote{The symmetry of binary states further amplifies this effect: consider a scenario in which the state has changed an even number of times, say twice. However, an agent in a cascade would mistakenly perceive the state as unchanged and therefore have no reason to change their action.} On the other hand, because the signals are not perfectly informative about the state, agents sometimes change actions unnecessarily. Building on the idea of excessive behavioral changes, we say that \emph{fads emerge} if there are more action changes than state changes. Our main result (Theorem \ref{cor:exptime}) shows that even with the occurrence of temporary cascades, fads emerge almost surely in the long run\textemdash that is, actions change excessively relative to the state over time.

%The idea behind temporary cascades is as follows: once agents are in a cascade, there is no new inflow of information, causing the social information on which the cascade was built to stagnate. However, as the state evolves, this information gradually loses relevance to agents aiming to match the current state. Consequently, once this social information fades sufficiently, agents exit the cascade and start responding to their private signals again. In other words, unlike in a static environment where cascades last indefinitely, agents in a dynamic environment do not permanently disregard their private signals.

%Intuitively, once agents are in a cascade, there is no further influx of new information. However, as the state evolves, the social information supporting this cascade gradually loses relevance to the current agent. Thus, once this social information fades sufficiently, agents begin responding to their private signals again. We focus on the setting where the state evolves slowly, allowing the occurrence of these temporary cascades. This is because if otherwise, it is clear that actions would be more volatile than the state, as volatility in actions is purely caused by the noise in the signals, which are more volatile than the state itself.

%\footnote{This is because, if there were no temporary information cascades and agents always responded to their private signals, all volatility in actions would be driven by noise in the signals, so it would be clear that actions would be more volatile than the state.} 

For example, consider a private signal that matches the current state 80 percent of the time. When the state changes once every 100 periods on average, we demonstrate that it takes fewer than sixty-one periods for agents to change their actions (see Proposition \ref{prop:ub}). As a result, the long-term frequency of action changes must exceed that of state changes, leading to the emergence of fads in the long run. It is worth emphasizing that in this model, agents are fully rational, and fads arise from their desire to match the ever-changing state based on the information they have, rather than from any payoff externalities between agents or heuristic-driven behavior.

The proof strategy behind the long-term emergence of fads is as follows. First, for any fixed signal precision and probability of state change, there exists a maximum length for each temporary information cascade. Consequently, although the presence of temporary cascades prolongs action inertia, this effect is limited by their bounded length. Furthermore, once agents exit a cascade, they only need one opposing signal to change their actions. Together, these observations allow us to establish a lower bound on the probability of an action change, which in turn provides an upper bound for the expected time between these changes. We then show that this upper bound is less than the expected time between state changes, implying that action changes occur more frequently than state changes on average. Finally, by translating the expected time between changes for both the state and the action into their long-term frequency of changes, we conclude that fads emerge in the long run.

\subsection{Related Literature} \label{sec:lit}
This paper is closely related to a small stream of studies on social learning in a changing state. In BHW, they briefly discuss the case where a one-time shock to the state could break the cascade, even though that shock may never be realized. They provide a numerical example in which the probability of an action change is at least 87\% higher than the probability of a state change (see their Result 4), which aligns with our main result. Later, \citet*{moscarini1998social} show that if the underlying state evolves in every period and is sufficiently persistent, an information cascade must arise, but it only lasts temporarily, i.e., it must end in finite time. Our work builds on their model but with a different focus. Instead of analyzing the short-term patterns of information cascades, such as the conditions under which they arise or end, we ask: in the long run, should one expect more volatility in actions or in the state?

In a setting where a single agent repeatedly receives private signals, \citet*{hirshleifer2002economic} examine the effect of \emph{memory loss}\textemdash a situation in which the agent only recalls past actions but not past signals\textemdash on the continuity of the agent's behavior. They analyze the equilibrium behavior of a
five-period stylized model and show that in a relatively stable environment, memory loss induces excessive action inertia compared to a full-recall regime. In contrast, in a more volatile environment, memory loss results in excessive action impulsiveness.\footnote{Intuitively, as volatility of the environment increases, past actions become less relevant to the current state. At some point, this information weakens enough so that the amnesiac agent would always follow her latest signal, but the full-recall agent may not do so at this point. Hence, there is an increase in the probability of an action change due to amnesia.} Different from their work\textemdash which examines how an agent’s ability to recall past actions and signals affects action fluctuations\textemdash our study investigates the long-term fluctuations in the actions of agents who observe past actions in a changing environment and compares these fluctuations to those of the underlying state.

Among a few more recent studies that consider a dynamic state, the efficiency of learning has been a primary focus of study. For example, \citet*{frongillo2011social} consider a specific environment in which the underlying state follows a random walk with non-Bayesian agents who use different linear rules when updating. Their main result is that the equilibrium updating weights may be Pareto suboptimal, causing inefficiency in learning.\footnote{See more studies in the computer science literature, e.g., \citet*{acemoglu2008convergence, shahrampour2013online} that consider a dynamic environment with non-Bayesian agents.} In a similar but more general environment, \citet*{dasaratha2020learning} show that having sufficiently diverse network neighbors with different signal distributions improves learning. Their idea is that having diverse signals enables agents to extract the most relevant information from the old and confounded data, thereby achieving higher efficiency in information aggregation. 

In a setup similar to ours, a recent study by \citet*{levy2022stationary} considers the welfare implication of a dynamic state. In their model, agents observe a random subsample drawn from all past actions and then decide whether to acquire private signals that are potentially costly. These model generalizations allow them to highlight the trade-off between learning efficiency and responsiveness to environmental changes in maximizing equilibrium welfare. In contrast, we assume that agents observe the full history of  past actions and there is no cost associated with obtaining their private signals. We consider this canonical sequential learning model without further complications as our focus is on comparing the long-term relative frequency of action and state changes\textemdash a question that turns out to be nontrivial even in this simple setup.

\section{Model}
We follow the setup from \citet*{moscarini1998social} closely. Time is discrete, and the horizon is infinite, i.e., $t \in \mathbbm{N}_+ = \{1, 2, \ldots\}$. There is a binary state $\theta_t \in \{+1, -1\}$ that evolves over time. Specifically, the state evolves according to a Markov chain with a symmetric transition probability $\eps \in (0,1)$:
$$\Prob[\theta _{t+1} \neq i | \theta_t = i]= \eps, \text{~for~} i \in \{+1, -1\}.$$
For simplicity, we assume that both states are equally likely at the beginning of time. Note that this uniform distribution is also the stationary distribution of $\theta_t$.

A sequence of short-lived agents indexed by time $t$ arrive in order, each acting once by choosing an action $a_t \in \{+1, -1\}$. For each agent $t$, she obtains a payoff of one if her action matches the current state, i.e., $a_t = \theta_t$ and zero otherwise. Before choosing an action, she receives a binary private signal $s_t \in \{+1, -1\}$ and observes the history of all past actions made by her predecessors, $h_{t-1}= (a_1, \dots, a_{t-1})$. Conditional on the entire sequence of states, the private signals $(s_t)$ are independent, and each $s_t$ follows a Bernoulli distribution $B_{\theta_t}(\alpha)$ where $\alpha \in (1/2, 1)$ is the probability of matching the current state: 
$$\Prob[s_t = i | \theta_t = i] = \alpha, \text{~for~} i \in\{+1, -1\}.$$ 

We assume throughout that the state is \emph{sufficiently persistent}: for any signal precision $\alpha\in (1/2, 1)$, the probability of a state change $\eps \in (0, \alpha(1-\alpha))$. Under this assumption, \cite*{moscarini1998social} show that information cascades can occur, but they only last temporarily. Equivalently, one can think of this assumption as follows: in every period, with probability $2\eps \in (0, 2\alpha(1-\alpha))$ the state will be redrawn from the set $\{+1, -1\}$ with equal probability. Thus, the probability of a state change is equal to $\eps \in (0, \alpha(1-\alpha))$. 

At any time $t$, the sequence of events is as follows. First, the agent arrives and observes the history of all past actions, $h_{t-1}$. Second, the state $\theta_{t-1}$ transitions to $\theta_t$ with a probability $\eps$ of switching. After the state transitions, agent $t$ receives a private signal $s_t$ that matches the current state $\theta_t$ with probability $\alpha$. Finally, she chooses an action $a_t$ that maximizes the probability of matching $\theta_t$, conditional on $(h_{t-1}, s_t)$ the information available to her. A graphical illustration of the sequence of events is shown in Figure \ref{fig:timing}.

\begin{figure}
    \centering
    \begin{tikzpicture}
        \draw (0,-1.5) circle (0.25cm) node[] {$\theta_1$}; % Draw the first circle
        \draw (0,0) circle (0.25cm) node[above=0.3cm] {1}; % Draw the first circle
        \draw[->, dashed] (0,-1.25) -- (0,-0.25) node[midway, right] {$s_1$}; % Draw the arrow from the first to the second circle
        \draw (0,0) node{$a_1
        $}; % Color the first circle blue
        \draw (2.5,0) circle (0.25cm) node[above=0.3cm] {2}; % Show label 2 and the second circle
        \draw[<-] (2.25,0) -- (0.25,0); % Draw the arrow from the second circle to the first
        \draw (2.5,-1.5) circle (0.25cm) node[] {$\theta_2$}; % Show the second theta
        \draw[->] (0.25,-1.5) -- (2.25,-1.5); % Draw the arrow from the first to the second theta
        \draw[->, dashed] (2.5,-1.25) -- (2.5,-0.25) node[midway, right] {$s_2$}; % Draw the dashed arrow for s_2
        \draw (2.5,0) node{$a_2$}; % Color the second circle blue
        \draw (5,0) circle (0.25cm) node[above=0.3cm] {3}; % Show label 3 and the third circle
        \draw[<-] (4.75,0) -- (2.75,0); % Draw the arrow from the third circle to the second
        \draw (5,-1.5) circle (0.25cm) node[] {$\theta_3$}; % Show the third theta
        \draw[->] (2.75,-1.5) -- (4.75,-1.5); % Draw the arrow from the second to the third theta
        \draw[->, dashed] (5,-1.25) -- (5,-0.25) node[midway, right] {$s_3 $}; % Draw the dashed arrow for s_3
        \draw (5,0) node{$a_3$}; % Color the third circle blue
        \node at (7.5,0) {...}; % Add the ... node at the end
        \node at (7.5,-1.5) {...}; % Add the ... node at the end
        \draw[<-] (7.25,0) -- (5.25,0) node[midway, above] {}; % Add the arrow at the end
        \draw[<-] (7.25,-1.5) -- (5.25,-1.5) node[midway, above] {}; % Add the arrow at the end
    \end{tikzpicture}
    \caption{An illustration of processes $(\theta_t)$ and $(a_t)$.}
    \label{fig:timing}
\end{figure}

\subsection{Fads} 
Given that each agent aims to match the current state, as the state evolves, the best action to take also fluctuates. BHW informally discuss the idea of faddish behavior as a situation where action changes occur more frequently than state changes. In other words, fads represent scenarios where there are  \emph{excessive} action changes relative to the state. To formalize this idea, we denote the fraction of time periods $t \leq n$ for which $a_t \neq a_{t+1}$ and $\theta_{t}\neq \theta_{t+1}$ by $$\mathcal{Q}_a(n): = \frac{1}{n}\sum_{t=1}^n \mathbbm{1}(a_t \neq a_{t+1})\quad \text{and}\quad \mathcal{Q}_{\theta}( n) :=\frac{1}{n}\sum_{t=1}^n \mathbbm{1}(\theta_t \neq \theta_{t+1}), ~\text{respectively.}$$
Formally, we say that \textit{fads emerge at time $n+1$} if  
\begin{equation} \label{eq:def_fads}
\mathcal{Q}_a( n) >  \mathcal{Q}_{\theta}( n). 
\end{equation}
Multiplying both sides of \eqref{eq:def_fads} by $n$, the emergence of fads at time $n+1$ means that actions have changed more frequently than the state by time $n+1$. 

\subsection{Agents' Beliefs} \label{sec:belief dynamics}
Let $p_t:= \Prob[\theta_t = +1 | h_{t-1}, s_t]$ denote the \textit{posterior belief} of agent $t$ that the state is positive after observing $(h_{t-1}, s_t)$ the pair of action history and private signal. The log-likelihood ratio (LLR) of agent $t$'s posterior beliefs of the state being $+1$ over the state being $-1$ is 
\[
L_t = \log \frac{p_t}{1-p_t} = \log \frac{\Prob[\theta_t = + 1|h_{t-1}, s_t]}{\Prob[\theta_t =-1|h_{t-1}, s_t]}.
\] 
We call $L_t$ the \emph{posterior LLR} at time $t$. By Bayes' rule, the posterior LLR at time $t$ is equal to 
\begin{align} \label{eq:LLR}
L_t = \log \frac{\Prob[\theta_{t} = + 1| h_{t-1}]}{\Prob[\theta_{t} = -1| h_{t-1}]}  + \log \frac{\Prob[s_t | \theta_{t} = + 1, h_{t-1} ]}{\Prob[s_t |\theta_{t} = -1,  h_{t-1}]}. 
\end{align}
We refer to the first term in \eqref{eq:LLR} as the \emph{public LLR} at time $t$ and denote it by $\ell_t$. This is the log-likelihood ratio of the \emph{public belief} of agent $t$ after observing only the history of actions $h_{t-1}$, which we denote by $q_t:= \Prob[\theta_t =+1|h_{t-1}]$. Since the private signal is independent of the history of actions conditional on the current state, the second term in \eqref{eq:LLR} reduces to the LLR induced by the signal itself, which is equal to $c_\alpha := \log \frac{\alpha}{1-\alpha}$ if $s_t = +1$ and $-c_\alpha$ if $s_t =-1$. Therefore, depending on the realization of the private signal, the posterior LLR at time $t$ evolves as follows:
\begin{align} \label{eq:publicLLR} 
    L_{t}   
 & = \begin{cases}
\ell_{t} -c_\alpha & \text{~if~} s_t = -1, \\
\ell_{t}  + c_\alpha & \text{~if~} s_t = + 1.
\end{cases}
    \end{align}
\subsection{Agents' Behavior}
The optimal action for agent $t$ is the action that maximizes her expected payoff conditional on the information available to her:
\begin{align*}
    a_t & \in \argmax_{a\in \{-1, +1\}} \Prob[\theta_t= a|h_{t-1}, s_t].
\end{align*} 
Thus $a_t = +1$ if $L_t >0$ and $a_t = -1$ if $L_t <0$. When $L_t = 0$, agent $t$ is indifferent between both actions. We assume that in this case, she would follow what her immediate predecessor did in the previous period, i.e., $a_t =a_{t-1}$.\footnote{Our results do not depend on this assumption and are robust to any tie-breaking rule that is common knowledge.} This tie-breaking rule differs from the one used in \citet*{moscarini1998social}, where indifferent agents are assumed to follow their own private signals. We make this assumption so that any action changes are driven by agents' strict preference for one action over another, rather than by the specific choice of the tie-breaking rule.

\subsection{Cascade and Learning Regions} As mentioned before, an \textit{information cascade} occurs when the social information inferred from others' past actions outweighs an agent's private signal, leading the agent to disregard their private information. From \eqref{eq:publicLLR}, we see that when $\lvert \ell_t\rvert > c_\alpha$, the sign of $L_t$\textemdash and therefore the optimal action for agent $t$\textemdash is determined solely by the sign of $\ell_t$, regardless of $s_t$. Thus, in this case, $a_t = +1$ if $\ell_t > c_\alpha$ and $a_t =-1 $ if $\ell_t < -c_\alpha$. 

Similarly, when $|\ell_t| = c_\alpha$, the tie breaking rule at indifference implies that agent $t$ will choose the same action as agent $t-1$, regardless of the realization of $s_t$. Furthermore, note that $a_{t-1}$ is equal to the sign of $\ell_t$.\footnote{To see this, suppose without loss of generality that $\ell_t = c_\alpha$, so $\text{sign}(\ell_t)= +1$. If $s_t =+1$, then $L_t = \ell_t + c_\alpha > c_\alpha$, which leads to $a_t = +1$. If $s_t = -1$, then $L_t = \ell_t - c_\alpha =0$ and by the tie-breaking rule, $a_t = a_{t-1}$. In either case, $a_t = a_{t-1}$, which is equal to $\text{sign}(\ell_t) = +1$. This is because if $a_{t-1} = -1$, it would follow from $\ell_t = c_{\alpha}$ that $\ell_{t-1}> c_\alpha$, which implies $a_{t-1} = +1$. A contradiction.} Thus, we refer to the region of the public LLR where $|\ell_t| \geq c_\alpha$ as the \emph{cascade region}. Conversely, when $|\ell_t| < c_\alpha$, agent $t$ chooses the action according to her private signal (i.e., $a_t = s_t$), and we refer to this region as the \textit{learning region}.

\section{Results}
\subsection{A Benchmark}
As a benchmark, we briefly discuss the case in which each short-lived agent only observes her own private signal but not the actions of her predecessors. In this scenario, agent $t$ simply follows her private signal $s_t$, as it is her only source of information about the state.\footnote{This case is behaviorally equivalent to a scenario in which the state lacks sufficient persistence, thereby preventing temporary information cascades and causing agents to rely solely on their private signals.} Consequently, in the long run, agents' actions fluctuate as frequently as their private signals. By the strong law of large numbers,
\[
\lim_{n\to \infty} \mathcal{Q}_a(n) = \Prob[s_{t} \neq s_{t+1}] \quad \text{almost surely.}
\]
A straightforward calculation then shows that $$\Prob[s_{t} \neq s_{t+1}] = (1-\alpha^2) (1-\eps) + \alpha^2 \eps.$$ %=  2\alpha(1-\alpha) + \eps (2\alpha -1)^2$

First, observe that the above probability is strictly greater than $\eps$, which is the long-term frequency of state changes (see details in Section \ref{sec:analysis}). This is intuitive since in this benchmark, agents always follow their signals, so actions exhibit the same level of volatility as the signals, which are more volatile than the state itself. Additionally, as the private signal becomes less noisy (i.e., as $\alpha \to 1$), the long-run frequency of action changes converges to that of state changes. This is because, as signals become increasingly precise, unnecessary changes in actions decrease. In the limit, when signals are perfectly informative, actions change as often as the state in the long run. 

\subsection{Public Actions}
Now, we turn to our main setting, where each short-lived agent observes both her private signal and the actions of her predecessors. Since the state is sufficiently persistent, information cascades arise temporarily, during which agents mimic their predecessors rather than respond to their private signals. As a result, compared to the benchmark, action changes are less frequent. Our main result shows that, despite these periods of cascade-driven inaction, fads still almost surely emerge in the long run. Recall that in \eqref{eq:def_fads} we defined the emergence of fads at time $n+1$ as a higher relative frequency of action changes compared to state changes.

\begin{theorem} \label{cor:exptime}
For any signal precision $\alpha \in (1/2, 1)$ and probability of state change $\eps \in (0, \alpha(1-\alpha))$, fads emerge in the long run almost surely:
\[
\lim_{n\to \infty} \mathcal{Q}_a(n)  > \lim_{n\to \infty}  \mathcal{Q}_{\theta}(n)  \quad \text{almost surely.}
\]
\end{theorem}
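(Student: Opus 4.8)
The plan is to replace the two empirical frequencies in \eqref{eq:def_fads} by their stationary values and then compare those. The right-hand side is immediate: $(\theta_t)$ is a stationary, irreducible two-state chain, so the ergodic theorem gives $\lim_{n\to\infty}\mathcal{Q}_\theta(n) = \Prob[\theta_1 \neq \theta_2] = \eps$ almost surely.

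For the left-hand side I would work with the belief process $M_t := (l_t, \theta_t)$, which is a time-homogeneous Markov chain: the action $a_t$ is a function of $l_t$ and the signal $s_t \sim B_{\theta_t}(\alpha)$; the public likelihood updates by $l_{t+1} = T(l_t \pm c_\alpha)$ in the learning region and $l_{t+1} = T(l_t)$ inside a cascade, where $T(x) = \log\frac{(1-\eps)e^x + \eps}{\eps e^x + (1-\eps)}$ is the contraction induced by the state transition; and $\theta_{t+1}$ follows the state transition. Augmenting the chain to $(l_t,\theta_t,s_t)$ renders $\mathbbm{1}(a_t \neq a_{t+1})$ a function of two consecutive states. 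Two structural facts drive everything: since the increments are $\pm c_\alpha$ and $T$ contracts, $l_t$ stays in the bounded interval $(-2c_\alpha, 2c_\alpha)$; and every cascade is self-limiting, because inside it $l_t$ contracts monotonically toward $0$ (actions being uninformative) and hence re-enters the learning region after at most a fixed number $K = K(\alpha,\eps)$ of steps. I would then argue that this chain is uniformly ergodic: cascades funnel the belief back to the learning region from every state in bounded time, and there the maps $l \mapsto T(l \pm c_\alpha)$ are contractions of modulus $1-2\eps < 1$, so a fixed finite signal pattern---of probability bounded below---drives $l$ into an arbitrarily small ball regardless of its starting point. This yields a Doeblin minorization, hence a unique stationary law $\pi$ and almost-sure convergence of ergodic averages; in particular $\lim_{n\to\infty}\mathcal{Q}_a(n) = f_a := \Prob_\pi[a_1 \neq a_2]$ a.s.

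It remains to show the strict inequality $f_a > \eps$. Writing $\tau_a$ for the expected number of periods between consecutive action changes under $\pi$, the renewal--reward identity gives $f_a = 1/\tau_a$, and likewise $\eps = 1/\tau_\theta$ with $\tau_\theta = 1/\eps$; so it suffices to prove $\tau_a < 1/\eps$. Here I would exploit that the inter-change dynamics are almost deterministic: after a change to, say, $+1$, the belief sits just below $c_\alpha$; a down-signal (of probability $\Prob[s=-1\mid\theta_t]$) immediately flips the action, while an up-signal pushes the belief back into a full up-cascade whose deterministic length is $\le K$ and whose exit level is again just below $c_\alpha$. Thus the time to the next change is a geometric number of near-identical cascade excursions, and the only non-degenerate randomness is the two-state process $\theta_t$ sampled at cascade exits. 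Consequently $\tau_a$ solves a small linear system driven by $\theta$, which I would bound from above and compare with $1/\eps$.

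The main obstacle is this last comparison. The naive estimate replaces the exit flip-probability by its worst case $1-\alpha$, giving $\tau_a \lesssim (K+1)/(1-\alpha)$; but as $\eps \to 0$ one has $K \sim C/\eps$, so this crude bound is of the same order as $1/\eps$ and can in fact exceed it (e.g.\ already near $\alpha = 0.8,\ \eps = 0.01$). The resolution, and the delicate heart of the argument, is that a cascade lasts $\Theta(K) = \Theta(1/\eps)$ periods, during which the state switches a non-negligible number of times; hence at the moment the belief exits a cascade the opposing signal arrives with probability appreciably larger than $1-\alpha$, strictly shrinking $\tau_a$. Tracking the conditional law of $\theta_t$ at cascade exit---where the persistence hypothesis $\eps < \alpha(1-\alpha)$ supplies the needed margin---is what makes $\tau_a < 1/\eps$, and hence $f_a = 1/\tau_a > \eps$, for every admissible $\eps$. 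Combined with the first two steps this gives $\lim_n \mathcal{Q}_a(n) > \lim_n \mathcal{Q}_\theta(n)$ almost surely.
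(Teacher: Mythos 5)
Your skeleton is right --- compute $\lim_n\mathcal{Q}_\theta(n)=\eps$ by the LLN, reduce $\lim_n\mathcal{Q}_a(n)$ to the reciprocal of the mean time between action switches, and show that mean is strictly below $1/\eps$ --- and you correctly identify the one place where the argument could die: the naive flip probability $1-\alpha$ at cascade exit gives a bound of order $K/(1-\alpha)$, which indeed exceeds $1/\eps$ already at $(\alpha,\eps)=(0.8,0.01)$. But two steps are genuine gaps rather than routine details. First, the ergodicity step. A Doeblin minorization for $(l_t,\theta_t)$ is not available: from any point the transition kernel is supported on finitely many points of a continuum, and generically $f_i(f_j(l))\neq l$, so the orbit essentially never revisits a state and there is no common absolutely continuous component to minorize against. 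Contraction of the maps $l\mapsto T(l\pm c_\alpha)$ gives at best convergence in a Wasserstein sense, which does not by itself yield almost-sure convergence of ergodic averages of the \emph{discontinuous} observable $\mathbbm{1}(\mathrm{sign}(l_{t+1})\neq\mathrm{sign}(l_{t+2}))$ (discontinuous exactly at $l=0$, where the chain concentrates). The paper avoids stationarity entirely: it defines the sign-switch times $\mathcal{T}_i$ and gaps $\mathcal{D}_i=\mathcal{T}_i-\mathcal{T}_{i-1}$, proves a \emph{uniform} conditional bound $\E[\mathcal{D}_i\mid l_{\mathcal{T}_{i-1}}]<M(\alpha,\eps)<1/\eps$ (Proposition \ref{prop:ub}) together with uniformly bounded moments (Lemma \ref{lem:finitevar}), and then applies a martingale strong law via Kronecker's lemma to get $\limsup_n\frac{1}{n-1}\sum_{i=2}^n\mathcal{D}_i<1/\eps$ a.s.; no stationary distribution or renewal--reward identity is ever needed. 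If you want to keep your route you must replace Doeblin with an argument tailored to this degenerate chain; the martingale route is the cleaner fix.

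Second, the decisive inequality is asserted, not proved. Your proposed mechanism --- ``track the conditional law of $\theta$ at cascade exit'' --- is morally the right one, but the clean way to say it is that the exit value of the public likelihood lies in $(0,c_\alpha)$, i.e.\ the public belief is below $\alpha$, so the probability of an opposing signal is at least $2\alpha(1-\alpha)>1-\alpha$; the state-switching during the cascade is already baked into the decay of $q_t$ toward $1/2$. This gives the bound $M(\alpha,\eps)=1+K(\alpha,\eps)/(2\alpha(1-\alpha))$, but one must then still verify $M(\alpha,\eps)<1/\eps$ on the whole admissible range, which is a nontrivial calculus step (monotonicity in $\alpha$ plus the estimate $-\log(1-2\eps)\geq 2\eps$), and one must also handle the fact that after a sign switch the new likelihood need not sit ``just below'' the cascade boundary --- it can land one or two supporting signals away, which is why the paper's proof of Proposition \ref{prop:ub} runs a three-region case analysis. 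As written, your plan stops exactly where the real work begins.
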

Thus Theorem \ref{cor:exptime}
shows that in the long run, fads are guaranteed to emerge under social learning, even when the underlying state evolves slowly and there are temporary information cascades during which agents do not change their actions. As an example, consider a private signal that matches the current state 80 percent of the time. When the probability of a state change is 1 percent, the state changes on average once every 100 periods. Meanwhile, the average time between action changes is strictly less than 61 periods.\footnote{This follows from Proposition \ref{prop:ub} in Section \ref{sec:analysis} by substituting $\alpha = 0.8$ and $\eps = 0.01$ into $M(\alpha, \eps)$, yielding $M(0.8, 0.01) \approx 60.7$.} This implies that, over time, action changes will occur more frequently than state changes, leading to faddish behavior.

The idea behind the proof of Theorem \ref{cor:exptime} is as follows. Intuitively, as the state evolves, the older social information on which a cascade is built becomes less relevant to the current agent. Consequently, agents periodically stop mimicking past actions and begin responding to their private information. This periodic responsiveness to private signals drives fluctuations in actions since agents at
these times are susceptible to opposing news. Formally, we show that once the agent's public belief exits the cascade region, their action either changes or the public belief re-enters the same cascade region. We upper bound the probability of the latter event, thus providing an upper bound for the expected time between action switches (Proposition \ref{prop:ub}). We then compare this upper bound to the expected time between state changes and show that the former is strictly less than the latter. Finally, our main result (Theorem \ref{cor:exptime}) translates the expected times between action and state changes into their long-run frequencies and concludes that actions change more frequently than the state.

%Suppose the agents are in an information cascade, so no agents respond to their private signal. With no further influx of new information, the social information on which the cascade was built would stagnate. However, as the state evolves, this information gradually loses relevance to agents aiming to match the current state. Consequently, once this social information fades sufficiently, agents exit the cascade and start responding to their private signals again. In fact, a single opposing private signal to the current belief can trigger an action switch. This allows us to bound the probability of initiating a new information cascade from above, establishing an upper bound for the expected time between action switches (Proposition \ref{prop:ub}). It turns out that the maximum of this upper bound is strictly less than the expected time between state changes. Building on this, we conclude that the long-run relative frequency of action changes is higher than that of state changes.

Note that action changes are not independent events, and consequently, we cannot directly establish the connection between the expected time between changes and its long-run frequency using the standard law of large numbers. To address this, we study the process of the (random) time interval between action changes, which as shown in Lemma \ref{lem:finitevar} in the appendix, has well-defined moments. In addition, since the state changes over time, the agent's public belief also ceases to be a martingale\textemdash an important tool in analyzing the long-term outcome of learning in a fixed state model.\footnote{For example, it is essential in proving asymptotic learning \citep{smith2000pathological} for unbounded signals.} Nevertheless, it is still a Markov process. Given the Markov property, we analyze the transitional patterns of the public belief across different regions (see Lemma \ref{lem:steps}) and use them to study the expected time between action switches. 

%For example, it is essential in proving that all but finitely many agents eventually take the correct action under unbounded signals \citep{smith2000pathological}. 

We assume a sufficiently persistent state for two reasons. First, as discussed before, when the state is not sufficiently persistent\textemdash so that temporary information cascades never arise\textemdash agents would always follow their signals and change their actions accordingly. Hence, as in our benchmark, action clearly changes more frequently than the state. Second, even with a sufficiently persistent state, it is a priori unclear whether the state or the action would change more often. As the likelihood of state changes decreases, action changes also slow down. This is because when state changes are less frequent, past actions become more informative about the current state; consequently, cascades tend to last longer, leading to extended periods of action inertia. Our main result suggests that, over time, this prolonged action inertia is eventually overtaken by action impulsiveness, resulting in excessive changes in actions relative to the state.

\subsection{Numerical Simulations} 
\begin{table}[t]
    \centering    
    \begin{tabular}{c|c|c|c}
    \hline \hline
$\alpha$ \char`\\ ~$\eps$ & 0.05 & 0.1 & 0.2 \\ \hline
0.51 & 16,766 (5,081) & 28,564 (10,055) & 42,128 (20,024)  \\
0.75 & 15,240 (5,100)& 26,149 (10,034) &  --  \\
0.9 & 14,252 (5,096) & -- & -- \\
\hline \hline
    \end{tabular}
    \caption{The numerical simulations show the number of action and state changes (values in parentheses) across different values of $\alpha$ and $\eps \in (0, \alpha(1-\alpha))$ over 100,000 periods.}
    \label{table:sim}
\end{table}
To illustrate our main result, we simulate the frequencies of action and state changes under different values of signal precision ($\alpha$) and state volatility ($\eps$) over 100, 000 periods. Table \ref{table:sim} displays the results of these simulations. 

These simulations confirm our main result: for all pairs of parameter values considered, we see that action changes are more frequent than state changes. Next, we explore how the frequencies of action and state changes vary across different parameter values. In the first column of Table \ref{table:sim}, when the probability of state change is fixed at $0.05$, the frequency of action changes decreases as signal precision increases, aligning with the intuition that more precise signals help reduce unnecessary action changes. However, even with high signal precision ($\alpha = 0.9$), actions still change at least twice as often as the state, underscoring the persistence of faddish behavior despite increasingly informative signals. Conversely, as state volatility increases, both action and state changes become more frequent, as shown in the first row of Table \ref{table:sim}. Yet, the ratio of action to state changes declines from approximately $3.3$ to $2.1$, suggesting that state volatility directly impacts the frequency of state changes more than that of action changes.

\section{Analysis}
In this section, we analyze how the agent's public belief evolves in both the learning and cascade regions. These dynamics allow us to establish an upper bound on the expected time between action changes. We then compare this upper bound to the expected time between state changes and show that the former is strictly less than the latter. 

\subsection{The Belief Dynamics}
\subsubsection*{Cascade Region}
It is well-known that in this model, if the state is fixed ($\eps=0$), an information cascade will be triggered and, once triggered, will last indefinitely. This is because once the agent's public belief enters the cascade region, it remains there, as all subsequent agents face the same problem as the initial agent who started the cascade \citep{banerjee1992simple, BichHirshWelch:92}. Since signals are binary and imperfectly informative, the resulting cascade is formed based on limited information and thus can be incorrect with positive probability.\footnote{More generally, in a fixed state model with non-binary signals, whether agents eventually all choose the correct action depends on whether the private signals are unbounded or bounded. Our case with binary signals and $\alpha \in (1/2, 1)$ is a special case of bounded signals.}

However, if the state is changing $(\eps >0)$, the behavior of the agent's public belief becomes more complex. To see this, consider the case where the public LLR at time $t$ satisfies $\ell_t \geq |c_\alpha|$, and suppose $t$ is the time at which the public LLR first enters the cascade region from the learning region. In this case, agent $t$ follows the action of her immediate predecessor, so $a_{t}$ contains no additional information about $\theta_t$ beyond what $a_{t-1}$ provides. Meanwhile, between time $t$ and $t+1$, the state may change with probability $\eps$. Since $\theta_t$ follows a Markov chain, conditional on $\theta_t$, the history $h_{t}$ provides no further information about $\theta_{t+1}$. Thus, while $\ell_t$ remains in the cascade region, the corresponding public belief updates deterministically as follows:
\begin{align}\label{eq:xxx}
 q_{t+1}  = \Prob[\theta_{t+1}= +1 | h_{t}] \nonumber  %& = \sum_{i \in \{-1, +1\}} \Prob[\theta_{t+1} = +1| h_{t}, \theta_t=i] \Prob[\theta_t=i | h_{t}]  \\
% & = \sum_{i \in \{-1, +1\}} \Prob[\theta_{t+1} = +1| h_{t-1}, \theta_t=i] \Prob[\theta_t=i | h_{t}] \\
 & = (1-\eps) q_t + \eps (1-q_t)  \nonumber \\
 & =\ (1-2\eps)q_t +  (2\eps)\frac{1}{2}. 
\end{align}
Equivalently, we can write it in terms of the public LLR:
\begin{equation} \label{eq:cascadeLLR}
  \ell_{t+1} = \log \frac{q_{t+1}}{1-q_{t+1}} = \log \frac{(1-\eps)\ee^{\ell_t} + \eps}{1-\eps + \eps \ee^{\ell_t}}.  
\end{equation}
From \eqref{eq:xxx}, we observe that $q_{t+1}$ tends toward 1/2, so eventually, $\ell_{t+1}$ exits the cascade region. Intuitively, having a changing state depreciates the value of older social information, as actions observed in earlier periods become less relevant to the current agent. Consequently, after a finite number of periods, the agent's public belief gradually converges to $1/2$, so that a cascade supported by this belief eventually ceases. This is the main insight from \citet*{moscarini1998social}, where they show that information cascades, if they arise, can only be temporary under a changing state.  

\subsubsection*{Learning Region} 
Next, we consider the learning region where $|\ell| < c_\alpha$. If the state is fixed ($\eps =0$), then the agent at time $t$ simply follows their private signal: $a_t = s_t$. As a result, at time $t+1$, the agent's public belief coincides with their posterior belief:
\[
q_{t+1} = \Prob[\theta = +1 | h_{t-1}, a_t] = \Prob[\theta = +1 | h_{t-1}, s_t] = p_t.
\]
Hence, the corresponding log-likelihood ratios also coincide, i.e., $\ell_t = L_t$, and so $\ell_t$ also evolves according to \eqref{eq:publicLLR}. 

In contrast, if the state changes with probability $\eps >0$ in every period, upon observing the latest history, each agent needs to consider the possibility that the state may have changed after the most recent action was taken. However, neither the learning nor the cascade region is affected by a changing state as the state only transitions after the history of past actions is observed. It follows from Bayes' rule that
\begin{align} \label{eq:learn_LLR}
\ell_{t+1}  = \log \frac{q_{t+1}}{1-q_{t+1}} & = \log  \frac{\Prob[\theta_{t+1}= +1 |h_{t-1}, a_t]}{\Prob[\theta_{t+1}=-1|h_{t-1}, a_t]}  \nonumber \\
& = \log \frac{\sum_{i \in \{-1, +1\}} \Prob[\theta_{t+1}= +1, a_t| h_{t-1}, \theta_t=i] \cdot \Prob[\theta_t=i | h_{t-1}]}{\sum_{i \in \{-1, +1\}} \Prob[\theta_{t+1}=-1, a_t| h_{t-1}, \theta_t =i] \cdot \Prob[\theta_t=i | h_{t-1}]}. 
\end{align}
Since $\theta_t$ follows a Markov chain and $a_t =s_t$ in the learning region,  conditional on $\theta_t$, both $\theta_{t+1}$ and $a_t$ are independent of the history $h_{t-1}$ and of each other. Therefore, the public LLR evolves as follows:
\begin{align}\label{eq:changingpublicLLR}
    \ell_{t+1} = \begin{cases}
   \log \frac{(1-\eps)\alpha \ee^{\ell_t} + \eps (1-\alpha)}{\eps \alpha \ee^{\ell_t} + (1-\eps) (1-\alpha)} := f_{+}(\ell_t) & \text{~if~} s_t = + 1, \\
   \log \frac{(1-\eps)(1-\alpha) \ee^{\ell_t} + \eps \alpha}{\eps (1-\alpha) \ee^{\ell_t} + (1-\eps) \alpha}:= f_{-}(\ell_t) & \text{~if~} s_t = -1.
    \end{cases}
\end{align}
Note that both $f_{+}$ and $f_{-}$ are strictly increasing. This means that conditional on receiving any signal, the agent with a higher public belief has a higher posterior belief. In addition, it is easy to check that $f_{+}(\ell) > \ell$ and $f_{-}(\ell)<\ell$. This indicates that, relative to her prior, an agent's posterior belief increases when she receives a positive signal and decreases when she receives a negative signal.
 
From \eqref{eq:changingpublicLLR}, we see that the magnitude difference between $\ell_t$ and $\ell_{t+1}$ depends on both the realization of the private signal $s_t$ and the current value of $\ell_t$. The following lemma summarizes the transitional patterns of the public LLR when it is in the learning region. At any time $t$, we say that an action is \textit{opposing} to the current public belief if $a_t \neq \text{sign}(\ell_t)$ and \textit{supporting} otherwise. The following lemma is in spirit close to the \emph{overturning principle}  \citep{smith2000pathological}, but it applies to a changing state.

\begin{lemma} \label{lem:steps}
For any $\ell_t$ such that $|\ell_t| < c_\alpha$, the following two conditions hold.
\begin{compactenum}[(i)]
\item $a_{t} \neq \text{sign}(\ell_t)$ implies that $\text{sign}(\ell_{t+1}) =-\text{sign}(\ell_{t})$. 
%observing one opposing action is sufficient to overturn its sign: $a_t \neq \text{sign}(\ell_t)$ ;
\item $a_{t}=a_{t+1} = \text{sign}(\ell_t)$ implies that $\lvert \ell_{t+2}\rvert \geq c_\alpha$.
%at most two supporting actions are required to initiate a cascade.
\end{compactenum}
\end{lemma}
The first part of this lemma states that a single opposing action is sufficient to overturn the sign of the public LLR. The second part indicates that initiating a cascade requires at most two supporting actions. Intuitively, because the public LLR in the learning region tends to be moderate, it is sensitive to opposing evidence; at the same time, although the public belief adjusts more conservatively due to the possibility of a changing state, consecutive observations of supporting evidence are still sufficient to trigger a cascade.

Another important observation is that, regardless of whether the state is fixed or changing, the process $(\ell_t)$ forms a Markov chain.\footnote{This is because conditional on the state $\theta_t$, the private signal $s_t$ is independent of $\ell_{\tau}$, for any $\tau < t$.} In the case of a fixed state, the state space of this Markov chain is finite since the magnitude difference between $\ell_t$ and $\ell_{t+1}$ is a constant for any given signal precision. However, in the case of a changing state, the state space becomes infinite, as these magnitude differences also depend on the current value of $\ell_t$.\footnote{In fact, in almost all cases, two consecutive opposing signals do not exactly offset each other, i.e., $f_+(f_- (\ell)) \neq \ell$ and vice versa.} This poses a significant challenge in finding its stationary distribution, which is required to calculate the exact expected time between sign switches. We circumvent this problem by providing an upper bound to this expected time instead.

\subsection{Expected Time Between Switches} \label{sec:analysis}
We first calculate the expected time between state changes. Since $\theta_t$ follows a simple two-state Markov chain with a symmetric transition probability $\eps$, the expected time between state changes is inversely proportional to the likelihood of a state change. To illustrate this, let $x$ represent the expected time between state changes. Then $x$ satisfies the following equation:
\[
x = \eps + (1-\eps) (1+x),
\]
which implies that $x = 1/\eps$. That is, a higher likelihood of state changes corresponds to a shorter average time between changes. 

To calculate the expected time between action changes, note that while $a_t$ is not a Markov chain, it is a function of a Markov chain; more specifically, we have $a_t = \text{sign}(\ell_{t+1})$. However, as discussed earlier, this Markov chain is complicated\textemdash with infinitely many possible values and different transition probabilities\textemdash making it difficult to directly analyze the expected time between its sign switches. Therefore, we establish an upper bound, which also bounds the expected time between action changes. To do so, consider the maximum length of any cascade. Recall that such a maximum exists because the public belief in the cascade region slowly converges toward uniformity. Meanwhile, for any given signal precision and probability of a state change, no cascade can last longer than the one starting at $f_{+}(c_\alpha)$, which is the supremum of the public LLR. Hence, from \eqref{eq:cascadeLLR}, we can calculate a tight upper bound on the length of any cascade. 

Following \citet*{moscarini1998social}, we denote this upper bound by $K(\alpha, \eps)$ where\footnote{For completeness, we provide a similar calculation of $K(\alpha, \eps)$ as in Section 3.B of \citet*{moscarini1998social}. Fix any arbitrary $\alpha \in (1/2, 1)$ and $\eps \in (0, \alpha(1-\alpha))$. Denote $m$ as the supremum of public belief, where $m = \frac{(1-\eps)\alpha^2 + \eps (1-\alpha)^2}{\alpha^2+(1-\alpha)^2}$. 
Since the public belief in a cascade evolves deterministically according to \eqref{eq:xxx}, after $h$ periods, the public belief starting at $m$ equals
\[ 
g(h) := \eps \sum_{i=1}^{h-1}(1-2\eps)^i + (1-2\eps)^h m.  
\]
This implies that after spending $h$ periods in the cascade region, any public belief would have a value strictly lower than $g(h)$. Thus, whenever $g(h) \leq \alpha$, or equivalently $(1-2\eps)^{h+1} \leq 1-2\alpha(1-\alpha)$, the public LLR with value $g(h)$ must have exited the cascade region. Hence, the maximum number of periods that the public LLR can stay in the cascade region is $\frac{\log (1-2\alpha(1-\alpha))}{\log (1-2\eps)}$.}
\[
K(\alpha, \eps) = \frac{\log (1-2\alpha(1-\alpha))}{\log (1-2\eps)}.
\]
It is straightforward to see that the upper bound $K(\alpha, \eps)$ decreases in both $\alpha$ and $\eps$. As private signals become less precise, cascades contain more information relative to private signals, potentially extending the duration of a cascade. Meanwhile, as the state becomes less volatile, temporary cascades last longer because social information depreciates at a lower rate. Together, this suggests that prolonged cascades may result from either less precise private signals or a more stable environment.

Now, we are ready to provide an upper bound for the expected time between sign switches of the public LLR. For any positive integer 
$i = 1, 2, \ldots$, let $\mathcal{T}_i$ denote the random time at which the public LLR switches its sign for the $i$-th time, with $\mathcal{T}_0 =0$. Let $\mathcal{D}_i =\mathcal{T}_i - \mathcal{T}_{i-1}$ denote the (random) time elapsed between the $(i-1)$-th and $i$-th sign switch. 

\begin{proposition} \label{prop:ub}
For any positive integers $i \geq 2$, conditional on the public LLR switching its sign for the $(i-1)$-th time, the expected time to the next sign switch satisfies 
\[
\E[\mathcal{D}_i|\ell_{\mathcal{T}_{i-1}}] <  1 + \frac{K(\alpha, \eps)}{2\alpha(1-\alpha)}=: M(\alpha, \eps).
\]
\end{proposition}
This proposition states that, on average, the public LLR is expected to change its sign at least once every $M(\alpha, \eps)$ periods. For example, when $\alpha = 0.8$ and $\eps =0.01$, $M(0.8, 0.01)$ is approximately 61, indicating that the public LLR experiences at least one sign switch every 61 periods. 
Furthermore, notice that $M(\alpha, \eps)$ decreases in $\alpha$, so $M(1/2, \eps)$ is the maximum upper bound for any $\eps \in (0, \alpha(1-\alpha))$. Intuitively, when
private signals are only weakly informative, agents rely more on social information. As a
result, information cascades are more likely to arise, and so is action inertia.

We thus illustrate the proof idea of Proposition \ref{prop:ub} using a weakly informative signal. Suppose that $\alpha = 1/2 + \delta$ for a small $\delta>0$. Since $K(\alpha, \eps)$ decreases in $\alpha$, $K(\frac{1}{2}, \eps)$ is the greatest upper bound for the length of any cascade as $\delta$ approaches zero. Upon exiting a cascade, the probability that the public LLR switches its sign is about 1/2, since the agent, who follows her private signal, receives either a positive or negative signal with nearly equal probability. Therefore, the expected time between sign switches is bounded from above by a geometric distribution:
\begin{align*}
 %1+ \frac{1}{2}\Big(K(\frac{1}{2}, \eps) + \frac{1}{2} \big(K(\frac{1}{2}, \eps) + \frac{1}{2}(K(\frac{1}{2}, \eps) + \ldots   &=  
 1 +\sum_{i=1}^\infty \frac{i}{2^i} K(\frac{1}{2}, \eps) = 1 + \frac{2\log 2}{-\log(1-2\eps)} = M(1/2, \eps),
\end{align*}
and $M(1/2, \eps)$ decreases in $\eps$. Thus, for sufficiently small $\eps$, $M(1/2, \eps) \approx (\log 2)/\eps$, which is strictly less than $1/\eps$, the expected time between state changes.\footnote{See Claim \ref{claim:strict} in the appendix.} Given the one-to-one mapping between the agent's action and their public LLR, Proposition \ref{prop:ub} immediately implies the following result.  
\begin{corollary}\label{cor:action_vs_state}
The expected time between action changes is strictly less than the expected time between state changes.
\end{corollary}
That is, on average, actions change more quickly than the state. For instance, when the probability of a state change is equal to $0.05$, the state changes every twenty periods on average. In comparison, the maximum time for an action change is less than fourteen periods. Building on this, Theorem \ref{cor:exptime} formally shows the connection between the expected time between action changes and their long-run frequency, demonstrating that a shorter expected time implies a higher frequency.
%that if the average time for an action change is less than the time for a state change, then, in the long run, there are indeed more action changes than state changes.

\section{Conclusion}
We study the long-term behavior of agents who receive a private signal and observe the past actions of their predecessors in a changing environment. As the state evolves, the best action to take also fluctuates. We show that, in the long run, the frequency of action changes exceeds that of state changes, indicating that fads emerge under social learning in a dynamic environment. This result holds even with the occurrence of temporary information cascades in which agents simply mimic the action of their predecessors.

One may wonder if the main result is driven by the high frequency of action changes when the posterior belief is around 1/2. Accordingly, we can further restrict the definition of fads to exclude consecutive action changes, i.e., cases where $a_t \neq a_{t-1}$ and $a_{t-1}\neq a_{t-2}$. Simulation results show that even under this more restricted definition of fads, actions still change more frequently than the state. For example, with $\alpha = 0.75$, $\eps = 0.05$, and a total of $100, 000$ periods, action changes approximately 8,150 times, which is more frequent than the number of state changes, which is about 5,100 times.

An interesting extension would be to study the frequency of action changes for a single long-lived agent who repeatedly receives private signals about a changing state. We conjecture that, in this case, action changes would be less frequent than in our model, where only past actions are observable, but still more frequent than state changes. Intuitively, removing noisy observations of others' private signals (i.e., their actions) could reduce unnecessary action changes. If so, this would highlight the role of observational learning in accelerating action fluctuations, particularly in a slowly evolving environment. We leave this for future research, as it would require a different proof technique from the approach used here.

%Instead of considering the benchmark case where each short-lived agent receives one private signal and acts accordingly, one could examine the frequency of action changes for a single long-lived agent who repeatedly receives private signals about a changing state. In this scenario, we conjecture that action changes would be less frequent than in our main setting\textemdash where only past actions are observable\textemdash but still more frequent than state changes. Intuitively, shutting down the channel of noisy observations of others' private signals would reduce, but not entirely eliminate, the frequency of unnecessary action changes. As a result, we expect that the remaining action changes would still occur more frequently than state changes.

There are several possible avenues for future research. Recall that Proposition \ref{prop:ub} implies that $M(\alpha, \eps)$ is an upper bound to the expected time between action changes. One could ask whether this upper bound  $M(\alpha, \eps)$ is tight, and if so, for any finite time $N$, whether the number of action changes would be close to $N/M(\alpha, \eps)$. Based on the simulation results, we conjecture that it is not a tight bound. E.g., we let $\alpha = 0.9$ and $\eps =0.05$, and $N= 100,000$. Since $M(0.9, 0.05) \approx 11.5$, it implies that within these hundred thousand periods, the action should change at least about $8700$ times. However, our numerical simulation shows that the action changes about $14,200$ times, almost double the number suggested by $M(0.9, 0.05)$. 
Furthermore, our simulations suggest that as the private signal becomes less informative and the state changes more slowly, i.e., when $\alpha$ approaches $1/2$ and $\eps$ approaches $0$ at the same rate, the ratio between the frequency of action changes and state changes approaches a constant that is close to 4. This suggests that achieving a very accurate understanding of fads in this regime might be possible. 

\bibliographystyle{ecta}
\bibliography{reference.bib}

\begin{thebibliography}{14}
\newcommand{\enquote}[1]{``#1''}
\expandafter\ifx\csname natexlab\endcsname\relax\def\natexlab#1{#1}\fi

\bibitem[\protect\citeauthoryear{Acemoglu, Nedic, and Ozdaglar}{Acemoglu
  et~al.}{2008}]{acemoglu2008convergence}
\textsc{Acemoglu, D., A.~Nedic, and A.~Ozdaglar} (2008): \enquote{Convergence
  of rule-of-thumb learning rules in social networks,} in \emph{2008 47th IEEE
  Conference on Decision and Control}, IEEE, 1714--1720.

\bibitem[\protect\citeauthoryear{Aggarwal and Rivoli}{Aggarwal and
  Rivoli}{1990}]{aggarwal1990fads}
\textsc{Aggarwal, R. and P.~Rivoli} (1990): \enquote{Fads in the initial public
  offering market?} \emph{Financial management}, 45--57.

\bibitem[\protect\citeauthoryear{Banerjee}{Banerjee}{1992}]{banerjee1992simple}
\textsc{Banerjee, A.~V.} (1992): \enquote{A simple model of herd behavior,}
  \emph{The Quarterly Journal of Economics}, 107, 797--817.

\bibitem[\protect\citeauthoryear{Bikhchandani, Hirshleifer, and
  Welch}{Bikhchandani et~al.}{1992}]{BichHirshWelch:92}
\textsc{Bikhchandani, S., D.~Hirshleifer, and I.~Welch} (1992): \enquote{A
  theory of fads, fashion, custom, and cultural change as informational
  cascades,} \emph{Journal of Political Economy}, 992--1026.

\bibitem[\protect\citeauthoryear{Camerer}{Camerer}{1989}]{camerer1989bubbles}
\textsc{Camerer, C.} (1989): \enquote{Bubbles and fads in asset prices,}
  \emph{Journal of Economic Surveys}, 3, 3--41.

\bibitem[\protect\citeauthoryear{Dasaratha, Golub, and Hak}{Dasaratha
  et~al.}{2020}]{dasaratha2020learning}
\textsc{Dasaratha, K., B.~Golub, and N.~Hak} (2020): \enquote{Learning from
  neighbors about a changing state,} \emph{Available at SSRN 3097505}.

\bibitem[\protect\citeauthoryear{Feller}{Feller}{1966}]{feller2008introduction}
\textsc{Feller, W.} (1966): \emph{An introduction to probability theory and its
  applications, vol 2}, John Wiley \& Sons.

\bibitem[\protect\citeauthoryear{Frongillo, Schoenebeck, and Tamuz}{Frongillo
  et~al.}{2011}]{frongillo2011social}
\textsc{Frongillo, R.~M., G.~Schoenebeck, and O.~Tamuz} (2011): \enquote{Social
  learning in a changing world,} in \emph{International Workshop on Internet
  and Network Economics}, Springer, 146--157.

\bibitem[\protect\citeauthoryear{Hirshleifer and Welch}{Hirshleifer and
  Welch}{2002}]{hirshleifer2002economic}
\textsc{Hirshleifer, D. and I.~Welch} (2002): \enquote{An economic approach to
  the psychology of change: Amnesia, inertia, and impulsiveness,} \emph{Journal
  of Economics \& Management Strategy}, 11, 379--421.

\bibitem[\protect\citeauthoryear{L{\'e}vy, P{\k{e}}ski, and Vieille}{L{\'e}vy
  et~al.}{2022}]{levy2022stationary}
\textsc{L{\'e}vy, R., M.~P{\k{e}}ski, and N.~Vieille} (2022):
  \enquote{Stationary social learning in a changing environment,} \emph{arXiv
  preprint arXiv:2201.02122}.

\bibitem[\protect\citeauthoryear{Moscarini, Ottaviani, and Smith}{Moscarini
  et~al.}{1998}]{moscarini1998social}
\textsc{Moscarini, G., M.~Ottaviani, and L.~Smith} (1998): \enquote{Social
  learning in a changing world,} \emph{Economic Theory}, 11, 657--665.

\bibitem[\protect\citeauthoryear{Schaal and Taschereau-Dumouchel}{Schaal and
  Taschereau-Dumouchel}{2021}]{schaal2021herding}
\textsc{Schaal, E. and M.~Taschereau-Dumouchel} (2021): \enquote{Herding
  Through Booms and Busts,} \emph{CEPR Discussion Paper No. DP16368}.

\bibitem[\protect\citeauthoryear{Shahrampour, Rakhlin, and
  Jadbabaie}{Shahrampour et~al.}{2013}]{shahrampour2013online}
\textsc{Shahrampour, S., S.~Rakhlin, and A.~Jadbabaie} (2013): \enquote{Online
  learning of dynamic parameters in social networks,} \emph{Advances in Neural
  Information Processing Systems}, 26.

\bibitem[\protect\citeauthoryear{Smith and S{\o}rensen}{Smith and
  S{\o}rensen}{2000}]{smith2000pathological}
\textsc{Smith, L. and P.~S{\o}rensen} (2000): \enquote{Pathological outcomes of
  observational learning,} \emph{Econometrica}, 68, 371--398.

\end{thebibliography}

\appendix
\section{Proofs}

\begin{proof}[Proof of Lemma \ref{lem:steps}] 
Fix any arbitrary $\alpha \in (1/2, 1)$ and $\eps \in (0, \alpha(1-\alpha))$. Consider first the case where $\ell_t \in (0, c_\alpha)$. For part (i), suppose $a_t =-1$. Given that $\ell_t \in (0, c_\alpha)$ is in the learning region, we have $s_t =a_t =-1$. Since $f_{-}(\cdot)$ in \eqref{eq:changingpublicLLR} is strictly increasing, and $f_{-}(c_\alpha) =0$, we have $f_{-}(\ell_t) <0$ for all $\ell_t \in (0, c_\alpha)$.  
Thus, it follows from \eqref{eq:changingpublicLLR} that $\ell_{t+1} = f_{-}(\ell_t)$ and $\text{sign}(f_{-}(\ell_t)) = -1 =  -\text{sign}(\ell_t)$. For part (ii), it suffices to show that $f_{+}(f_{+}(0)) \geq c_\alpha$ as $f_{+}(\cdot)$ is strictly increasing. Let $c_u = f_{+}^{-1}(c_\alpha)$ denote the threshold at which exactly one positive signal is required for the public LLR to enter the cascade region on the positive action. This threshold is given by 
\[
c_u = \log \frac{(1-\alpha)(\alpha -\eps)}{\alpha (1-\alpha -\eps)} \in (0, c_\alpha).
\]
Note that $f_{+}(0) > c_u$ for all $\eps \in (0, \alpha(1-\alpha))$, and thus $f_{+}(f_{+}(0)) > f_{+}(c_u )$. By definition, $f_{+}(f_{+}(0)) > f_{+}(c_u) = f_{+}(f_{+}^{-1}(c_\alpha)) = c_\alpha$, as required. The case where $\ell_t \in (-c_\alpha, 0)$ follows an analogous argument. 
\end{proof}

\begin{proof}[Proof of Proposition \ref{prop:ub}]
Fix any arbitrary $\alpha \in (1/2, 1)$, $\eps \in (0, \alpha(1-\alpha))$ and some positive integer $i \geq 2$. Recall that $\mathcal{T}_{i-1}$ is the time at which the public LLR changes its sign for the $i-1$-th time and $\mathcal{D}_i = \mathcal{T}_{i} -\mathcal{T}_{i-1}$. Consider $\ell_{\mathcal{T}_{i-1}} >0$ and there are three disjoint intervals for the value of $\ell_{\mathcal{T}_{i-1}}$: (i) $[c_u, c_\alpha)$ where $c_u = f^{-1}_+(c_\alpha)$; (ii) $(0, c_u)$ and (iii) $[c_\alpha, f_{+}(c_\alpha))$. We will show that for all cases, $\E[\mathcal{D}_i|\ell_{\mathcal{T}_{i-1}}] < 1 + \frac{K(\alpha, \eps)}{2\alpha(1-\alpha)}$, where  $K(\alpha, \eps)$ is the upper bound on the length of any cascade.

Denote by $\pi(\ell)$ the probability of receiving a positive signal conditional on the public LLR being $ \ell$.\footnote{For ease of notation, we suppress its dependence on $\alpha$.} By the law of total probability, 
\begin{align}\label{eq:prob_upsignal}
    \pi(\ell) % & =\Prob[ s = +1| q = \frac{\ee^\ell}{1+\ee^\ell}]\\
   % & = \sum_{i\in \{+1, -1\}} \Prob[s =+1|q=\frac{\ee^\ell}{1+\ee^\ell}, \theta = i] \cdot \Prob[\theta=i|q=\frac{\ee^\ell}{1+\ee^\ell}]\\
     = \alpha \cdot \frac{\ee^\ell}{1+\ee^\ell} + (1-\alpha)\cdot \frac{1}{1+\ee^\ell} = \frac{1+ \alpha(\ee^\ell -1)}{1+ \ee^\ell},
\end{align}
and it is strictly increasing in $\ell$. Thus,  $$\bar\pi:=\sup_{\ell \in (0, c_\alpha)}\pi(\ell) = 1- 2\alpha(1-\alpha).$$ 
Let $\kappa(\ell)$ denote the length of a positive cascade triggered by receiving a positive signal conditional on the public LLR being $\ell$. Let $\mathcal{L}(\ell)$ represent the value of the public LLR after exiting the cascade region for the first time. We use $\lfloor K(\alpha, \eps) \rfloor \geq 1$ to denote the greatest integer less than or equal to $K(\alpha, \eps)$.

\textbf{Case (i).} Suppose $\ell_{\mathcal{T}_{i-1}} \in [c_u, c_\alpha)$. By part (i) of Lemma \ref{lem:steps}, 
since $\ell_{\mathcal{T}_{i-1}}$ is in the learning region, one opposing signal is sufficient to change the sign of $\ell_{\mathcal{T}_{i-1}}$. Thus, the expected time to the next sign switch is 
\begin{align} 
 \E[\mathcal{D}_i | \ell_{\mathcal{T}_{i-1}}] & = 1- \pi(\ell_{\mathcal{T}_{i-1}}) + \pi(\ell_{\mathcal{T}_{i-1}})\Big(\kappa(\ell_{\mathcal{T}_{i-1}}) + \E[  \mathcal{D}_{i} |  \mathcal{L}(\ell_{\mathcal{T}_{i-1}})] \Big), \nonumber 
    %& < 1- \bar{\pi} + \bar{\pi} \Big( \kappa(\ell_{\mathcal{T}_{i-1}}) + \E[  \mathcal{D}_{i} |   \mathcal{L}(\ell_{\mathcal{T}_{i-1}})]  \Big) \nonumber \\
\end{align} 
and by  the definition of $\bar\pi$, we have
\begin{equation}\label{eq:region1}
\E[\mathcal{D}_i | \ell_{\mathcal{T}_{i-1}}]<    1-\bar{\pi} + \bar{\pi} \Big(\lfloor K(\alpha, \eps) \rfloor + \E[  \mathcal{D}_{i} |   \mathcal{L}(\ell_{\mathcal{T}_{i-1}})]  \Big).
\end{equation}
Note that there are two possible cases for the value of $\mathcal{L}(\ell_{\mathcal{T}_{i-1}})$: either $\mathcal{L}(\ell_{\mathcal{T}_{i-1}}) \in [c_u, c_\alpha)$ or $\mathcal{L}(\ell_{\mathcal{T}_{i-1}}) \in (0, c_u)$. If it is the former case, then taking the supremum on both sides of \eqref{eq:region1} and rearranging,
\begin{equation} \label{eq:x}
       \sup_{c_u \leq \ell_{\mathcal{T}_{i-1}}< c_\alpha  }\E[\mathcal{D}_i |  \ell_{\mathcal{T}_{i-1}}]  \leq 1 + \frac{\bar{\pi}}{1-\bar{\pi}} \lfloor K(\alpha, \eps) \rfloor.
\end{equation}
If  it is the latter case, by the definition of $\bar\pi$,
\begin{align*} 
 \E[\mathcal{D}_i | \mathcal{L}(\ell_{\mathcal{T}_{i-1}})] 
    <  1- \bar{\pi} + \bar{\pi} \Big( 1 + \E[  \mathcal{D}_{i} |   f_{+}(\mathcal{L}(\ell_{\mathcal{T}_{i-1}}))]  \Big).
\end{align*}
Substituting the above inequality into \eqref{eq:region1}, 
\begin{align*}
    \E[\mathcal{D}_i | \ell_{\mathcal{T}_{i-1}}] < 1- \bar{\pi} + \bar{\pi} \Big( \lfloor K(\alpha, \eps) \rfloor + 1- \bar{\pi} + \bar{\pi} \big( 1 + \E[  \mathcal{D}_{i} |   f_{+}(\mathcal{L}(\ell_{\mathcal{T}_{i-1}}))]  \big)  \Big).
\end{align*}
Since $f_+(\cdot)$ is strictly increasing, by part (ii) of Lemma \ref{lem:steps}, $f_{+}(\mathcal{L}(\ell_{\mathcal{T}_{i-1}})) \in [c_u, c_\alpha)$. Thus, taking the supremum on both sides and rearranging, 
\begin{align*}
     \sup_{c_u \leq \ell_{\mathcal{T}_{i-1}}< c_\alpha  }\E[\mathcal{D}_i |  \ell_{\mathcal{T}_{i-1}}]  &\leq \frac{1-\bar{\pi} + (\lfloor K(\alpha, \eps) \rfloor+1) \bar{\pi}}{1-\bar{\pi}^2}\leq   1 + \frac{\bar{\pi}}{1-\bar{\pi}} \lfloor K(\alpha, \eps) \rfloor,
\end{align*}
where the second inequality holds since $\lfloor K(\alpha, \eps) \rfloor \geq 1$. 

\textbf{Case (ii).}
Suppose $\ell_{\mathcal{T}_{i-1}}  \in (0, c_u)$. By part (i) of Lemma \ref{lem:steps} and the definition of $\bar\pi$, the expected time to the next sign switch is bounded above: \begin{align*}
      \E[\mathcal{D}_i| \ell_{\mathcal{T}_{i-1}}]
     & < (1-\bar{\pi}) + \bar{\pi}(1+\E[ \mathcal{D}_i |  f_{+}(\ell_{\mathcal{T}_{i-1}})]). 
     \end{align*}
Since $f_{+}(\cdot)$ is strictly increasing, part (ii) of Lemma \ref{lem:steps} implies that $f_{+}(\ell_{\mathcal{T}_{i-1}}) \in [c_u, c_\alpha)$. It then follows from \eqref{eq:x} that  
\begin{align} \label{eq:region2}
    \E[\mathcal{D}_i| \ell_{\mathcal{T}_{i-1}}]     & <   \frac{\bar{\pi}^2(\lfloor K(\alpha, \eps) \rfloor-1)+1}{1-\bar{\pi}}.
\end{align}     

\textbf{Case (iii).} Suppose $\ell_{\mathcal{T}_{i-1}} \in [c_\alpha, f_{+}(c_\alpha))$. In this case, after at most $\lfloor K(\alpha, \eps) \rfloor$ periods, the public LLR initiated at $\ell_{\mathcal{T}_{i-1}}$ would have exited the cascade region. Hence, the expected time to the next sign switch is bounded above: 
\begin{align*}
    \E[\mathcal{D}_i| \ell_{\mathcal{T}_{i-1}} ] & \leq \lfloor K(\alpha, \eps) \rfloor + \E[\mathcal{D}_i|   \mathcal{L}(\ell_{\mathcal{T}_{i-1}})].
\end{align*}
Again, there are two possible cases for $\mathcal{L}(\ell_{\mathcal{T}_{i-1}})$: either $\mathcal{L}(\ell_{\mathcal{T}_{i-1}}) \in [c_u, c_\alpha)$ or $\mathcal{L}(\ell_{\mathcal{T}_{i-1}}) \in (0, c_u)$. If it is the former case, it follows from \eqref{eq:x} that 
\begin{align} \label{eq:region3} 
    \E[\mathcal{D}_i| \ell_{\mathcal{T}_{i-1}} ] 
    & < %\lfloor K(\alpha, \eps) \rfloor + 1+ \frac{\bar{\pi}}{1-\bar{\pi}}\lfloor K(\alpha, \eps) \rfloor = 
    1+\frac{1}{1-\bar{\pi}}\lfloor K(\alpha, \eps) \rfloor.
\end{align}
If it is the latter  case, then it follows from  \eqref{eq:region2} that   \begin{align*}
     \E[\mathcal{D}_i| \ell_{\mathcal{T}_{i-1}} ]  & < \lfloor K(\alpha, \eps) \rfloor + \frac{\bar{\pi}^2(\lfloor K(\alpha, \eps) \rfloor-1)+1}{1-\bar{\pi}} \\
     & = \lfloor K(\alpha, \eps) \rfloor +1+\bar{\pi} + \frac{\bar{\pi}^2}{1-\bar{\pi}}\lfloor K(\alpha, \eps) \rfloor \leq 1+\frac{1}{1-\bar{\pi}}\lfloor K(\alpha, \eps) \rfloor.
\end{align*}

Now, note that the maximum of these three upper bounds given in \eqref{eq:x} to \eqref{eq:region3} is $1+\frac{1}{1-\bar{\pi}}\lfloor K(\alpha, \eps) \rfloor$. Furthermore, by definition, $\lfloor K(\alpha, \eps) \rfloor \leq K(\alpha, \eps)$. Therefore, we conclude that 
\[
\E[\mathcal{D}_i| \ell_{\mathcal{T}_{i-1}} >0] <  1+\frac{K(\alpha, \eps)}{2\alpha(1-\alpha)}. 
\]
The case where $\ell_{\mathcal{T}_{i-1}}<0$ follows from an analogous argument. %Thus, we conclude that the above strict inequality holds for $\E[\mathcal{D}_i|\ell_{\mathcal{T}_{i-1}}]$. 
\end{proof}

\begin{claim}\label{claim:strict}
 $M(1/2, \eps) < 1/\eps$  for all $\eps \in (0, 1/4)$. 
\end{claim}
\begin{proof}
Recall that
$$M(1/2, \eps) = \sup_{\alpha \in (1/2, 1)} M(\alpha, \eps) = 1+ \frac{2\log 2}{-\log(1-2\eps)},$$
and rearranging, we have $M(1/2, \eps) < 1/\eps$ if  
\[
2\log 2 < -(\frac{1}{\eps}-1)\log(1-2\eps)
\]
since $\eps \in (0, 1/4)$. By the L'Hôpital's rule,
\begin{align*}
    \lim_{\eps \to 0} -(\frac{1}{\eps}- 1)\log(1-2\eps) = \lim_{\eps \to 0} 2 \frac{(1-\eps)^2}{1-2\eps} = 2 > 2\log 2.
\end{align*}
Since $-(\frac{1}{\eps}- 1)\log(1-2\eps)$ is strictly increasing in $\eps$, we conclude that $M(1/2, \eps) < 1/\eps$ for all $\eps \in (0, 1/4)$.
\end{proof}
Since $M(\alpha, \eps)$ decreases in $\alpha$, it follows from Claim \ref{claim:strict} and Proposition \ref{prop:ub} that for all $\alpha \in (1/2, 1)$ and $\eps \in (0, \alpha(1-\alpha))$, 
\begin{equation}\label{eq:new}
    M(\alpha, \eps) < 1/\eps.
\end{equation}

Before proceeding to the proof of Theorem \ref{cor:exptime}, we provide the following lemma, which will be useful. This lemma shows that the process $(\mathcal{D}_i)$ has well-defined moments. In particular, it implies that there is a finite uniform upper bound to its second moment $\E[\mathcal{D}_i^2]$, which is required to apply the standard martingale convergence theorem. Intuitively, since any cascade must end after $K(\alpha, \eps)$ periods, the probability that $\mathcal{D}_i$ is larger than some finite periods decreases exponentially fast, and so $\mathcal{D}_i$ must have finite moments. 

\begin{lemma}\label{lem:finitevar}
%Fix any signal precision $\alpha \in (1/2, 1)$ and probability of state change $\eps \in (0, \alpha(1-\alpha))$. Then, 
For every $r \in \{1,2,\ldots\}$ there is a constant $c_r$ that depends on $\alpha$ and $\eps$ such that for all $i$,  $\E[|\mathcal{D}_i|^r] < c_r$.
I.e., each moment of $\mathcal{D}_i$ is uniformly bounded, independently of $i$.
\end{lemma}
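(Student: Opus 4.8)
The plan is to prove a geometric tail bound for $\mathcal{D}_i$ that is uniform both in $i$ and in the configuration from which the $(i-1)$-th switch leaves the chain; since a geometrically decaying tail has finite moments of every order, with rate and prefactor depending only on $\alpha$ and $\eps$, the uniform bounds $c_r$ follow at once. Concretely, set $N := \lfloor K(\alpha, \eps) \rfloor + 1$ and $p := (1-\alpha)^N$, both depending only on $\alpha$ and $\eps$. I will use repeatedly that $(l_t)_t$ is a Markov chain and that $\mathcal{T}_{i-1}$ is a stopping time for it.

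The key step is a one-window switching estimate: from any public likelihood $l$ with, say, $\text{sign}(l) = +1$, the probability that the sign switches within the next $N$ periods is at least $p$, uniformly in $l$. To see this, consider the event that all of the next $N$ private signals are down-signals. If $l$ lies in the cascade region, actions are determined by the sign alone and the likelihood decays deterministically toward zero by \eqref{eq:cascadeLLR}, remaining positive throughout since \eqref{eq:xxx} gives $q_{t+1} - 1/2 = (1-2\eps)(q_t - 1/2)$ with $1-2\eps > 0$; by the definition of $K(\alpha,\eps)$ it exits into the learning region within $\lfloor K(\alpha, \eps)\rfloor$ periods, no matter where in the cascade region it started. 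Once in the learning region the agent follows her signal, so the first down-signal there is an opposing action and, by Lemma \ref{lem:steps}(i), overturns the sign; hence on the all-down event the sign switches at some period $\le N$ ahead. Conditional on the history and on any realization of the future states, signals are independent and each is a down-signal with conditional probability at least $1-\alpha$, so integrating over the future states gives the all-down event probability at least $(1-\alpha)^N = p$, uniformly over $l$. The case $\text{sign}(l) = -1$ is symmetric.

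I would then iterate this estimate along the chain. By the strong Markov property I restart at $l_{\mathcal{T}_{i-1}}$ and apply the one-window estimate to each successive block of $N$ periods in which the sign has not yet switched; conditioning on ``no switch in the first $k$ blocks'' leaves the chain in a state of the original sign, to which the same uniform bound $p$ applies, giving
\[
\Prob[\mathcal{D}_i > kN \mid l_{\mathcal{T}_{i-1}}] \le (1-p)^k \qquad (k = 0, 1, 2, \ldots),
\]
and hence $\Prob[\mathcal{D}_i > m \mid l_{\mathcal{T}_{i-1}}] \le (1-p)^{\lfloor m/N\rfloor}$ for every $m$. From the standard identity $\E[\mathcal{D}_i^r] = \sum_{m \ge 0}\big((m+1)^r - m^r\big)\Prob[\mathcal{D}_i > m]$ this tail yields a bound by the convergent series $\sum_{m\ge 0}\big((m+1)^r-m^r\big)(1-p)^{\lfloor m/N\rfloor}$, whose value depends only on $r$, $N$, and $p$, i.e.\ only on $r$, $\alpha$, and $\eps$. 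Taking expectations over $l_{\mathcal{T}_{i-1}}$ removes the conditioning, and setting $c_r$ to this series value gives $\E[|\mathcal{D}_i|^r] < c_r$ for all $i$.

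The main obstacle is securing the one-window estimate with a bound $p$ that is genuinely uniform over the unknown starting state $l_{\mathcal{T}_{i-1}}$: this is precisely where the two structural facts are indispensable — the uniform maximum cascade length $\lfloor K(\alpha,\eps)\rfloor$, so that no starting point can delay exit from a cascade beyond $N$ periods, and Lemma \ref{lem:steps}(i), so that a single opposing signal in the learning region forces a switch. Everything afterward — the regeneration via the strong Markov property and the passage from a geometric tail to uniformly bounded moments — is routine.
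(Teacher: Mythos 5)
Your proof is correct, and it reaches the conclusion by a genuinely different route than the paper, while resting on the same two structural facts (the uniform bound $\lfloor K(\alpha,\eps)\rfloor$ on cascade length and Lemma \ref{lem:steps}(i)). The paper bounds the tail by counting how many separate cascades the public likelihood must pass through to avoid a sign switch for $n$ periods: each cascade lasts at most $\lfloor K(\alpha,\eps)\rfloor$ periods and each visit to the learning region that fails to switch the sign costs a factor $\bar{\pi} := 1-2\alpha(1-\alpha) < 1$, giving $\Prob[\mathcal{D}_i > n] < \bar{\pi}^{\,1+\lfloor k(n)\rfloor}$ with $k(n)$ growing linearly in $n$; this is then deliberately weakened to a polynomial bound $Cn^{-p}$ (valid for every $p$) before summing the tail. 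You instead prove a Doeblin-type minorization: on the event that all $N = \lfloor K(\alpha,\eps)\rfloor + 1$ consecutive signals are opposing --- an event of probability at least $(1-\alpha)^N$ uniformly over the history, since the bound holds conditionally on any realization of the states --- the likelihood must exit any cascade and then be overturned by the first learning-region signal, so the sign switches within the window; iterating over disjoint windows via the Markov property gives the geometric tail $\Prob[\mathcal{D}_i > kN \mid l_{\mathcal{T}_{i-1}}] \le (1-(1-\alpha)^N)^k$ directly. Your constants are cruder (the all-opposing event is a wasteful sufficient condition, whereas the paper charges only one factor of $\bar{\pi}$ per cascade), but your argument avoids the somewhat delicate accounting behind the paper's ``minimum number of cascades'' step and makes the uniformity in $i$ and in the starting configuration completely transparent; both approaches ultimately deliver an exponential tail, which is more than enough for uniformly bounded moments of every order.
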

\begin{proof}
Fix any arbitrary $\alpha \in (1/2, 1)$, $\eps \in (0, \alpha(1-\alpha))$ and some positive integer $i \geq 2$. Suppose that $\ell_{\mathcal{T}_{i-1}} >0$ and so $\mathcal{D}_i = \mathcal{T}_{i} - \mathcal{T}_{i-1}$ is the time elapsed from a positive public LLR to a negative one. For any $n\geq 2$, we denote the minimum number (which may not be an integer) of temporary cascades required for $\mathcal{D}_i > n$ by 
\[
k(n) := \max\Big\{ \frac{n-1}{\lfloor K(\alpha, \eps) \rfloor}, 1 \Big\}.
\]
Recall that $\bar{\pi}$ is the supremum of the probability of receiving a positive signal conditional on the public LLR being $\ell$ for all $\ell \in (0, c_\alpha)$. By part (ii) of Lemma \ref{lem:steps}, for any $n \geq 2$, the probability of the event $\{\mathcal{D}_i > n\}$ is bounded above: 
\begin{align*}
    \Prob[\mathcal{D}_i >n ] & <  \bar{\pi}^{2 + (\lfloor k(n) \rfloor -1)}. 
\end{align*} 
Since $\mathcal{D}_i$ is a positive random variable, it  follows that for any $p >0$, 
\begin{align} \label{eq:prob_n}
\lim_{n \to \infty} n^p \Prob[|\mathcal{D}_i| > n] & = \lim_{n \to \infty} \frac{n^p}{ 1/\Prob[\mathcal{D}_i > n]} \nonumber \\ 
& <  \lim_{n \to \infty} \frac{n^p}{(1/\bar{\pi})^{1+ \lfloor k(n) \rfloor}} =0.
\end{align}
For any $r \geq 1$, the $r$-th moment of $|\mathcal{D}_i|$ satisfies
\begin{align*}
    \E[|\mathcal{D}_i|^r] &= \int_{0}^\infty \Prob[|\mathcal{D}_i|^r >t] dt \\
    %& < 1+\int_{1}^\infty \Prob[D > t^{1/r} ] dt \\
    & < 1+\int_{1}^\infty \Prob[\mathcal{D}_i > y ] r y^{r-1}dy \\
    & = 1+\sum_{n=1}^{\infty} \int_{n}^{n+1} \Prob[\mathcal{D}_i > y ] r y^{r-1}dy  \\
    & < 1+ \sum_{n=1}^{\infty} \Prob[\mathcal{D}_i > n] r (n+1)^{r-1}, 
\end{align*}
where the second inequality follows from a change of variable $y = t^{1/r}$. Since \eqref{eq:prob_n} implies that $\Prob[\mathcal{D}_i >n ] < C n^{-p}$ for some nonnegative constant $C$, it follows that for any $p > r $,
\begin{align*}
    \E[|\mathcal{D}_i|^r] & < 1+ r C \sum_{n=1}^\infty \frac{(n+1)^{r-1}}{n^p} \\
     & < 1+ r 2^{r-1} C  \sum_{n=1}^\infty  \frac{1}{n^{p-r+1}}< \infty,
\end{align*}
which holds for all $i$. Hence, for every $r \in \{1, 2, \ldots\}$, there exists a constant $c_r = 1+ r 2^{r-1} C  \sum_{n=1}^\infty  \frac{1}{n^{p-r+1}}$, independently of $i$, that uniformly bounds $\E[|\mathcal{D}_i|^r]$. 
\end{proof}

\begin{proof}[Proof of Theorem \ref{cor:exptime}]
Fix any arbitrary $\alpha \in (1/2, 1)$ and $\eps \in (0, \alpha(1-\alpha))$. Since $\theta_t$ follows a two-state Markov chain with a symmetric transition probability $\eps$, $(\mathbbm{1}(\theta_1 \neq \theta_2), \mathbbm{1}(\theta_2 \neq \theta_3), \ldots)$ is a sequence of i.i.d.\ random variables. By the strong law of large numbers, 
\[
\lim_{n \to \infty}  \mathcal{Q}_{\theta}(n) := \lim_{n \to \infty} \frac{1}{n} \sum_{t=1}^{n} \mathbbm{1}(\theta_t \neq \theta_{t+1}) = \Prob[\theta_{t} \neq \theta_{t+1}] = \eps \quad \text{almost surely.}
\]
Let $\Phi = (\mathcal{F}_1, \mathcal{F}_2, \ldots)$ be the filtration where each $\mathcal{F}_i = \sigma(\mathcal{D}_1, \ldots, \mathcal{D}_i)$ and thus $\mathcal{F}_{j} \subseteq \mathcal{F}_i$ for any $j \leq i$. So the process $(\mathcal{D}_1, \mathcal{D}_2, \ldots)$ is adapted to $\Phi$ since each $\mathcal{D}_i$ is $\mathcal{F}_i$-measurable. By Proposition \ref{prop:ub} and \eqref{eq:new}, there exists $\delta = 1/\eps - M(\alpha, \eps)>0$ such that for all $i \geq 2$,
\[
\E[\mathcal{D}_i | \ell_{\mathcal{T}_{i-1}}] < 1/\eps - \delta.
\]
By the law of iterated expectation and the Markov property of the public LLR, 
\begin{equation} \label{eq:prop1}
\E[\mathcal{D}_i | \mathcal{F}_{i-1}] = \E[\E[\mathcal{D}_i | \ell_{\mathcal{T}_{i-1}}, \mathcal{F}_{i-1}] | \mathcal{F}_{i-1}] < 1/\eps -\delta. 
\end{equation}
Let $X_{i} = \mathcal{D}_i - \E[\mathcal{D}_i| \mathcal{F}_{i-1}]$ for all $i \geq 2$ and since $\mathcal{F}_{i-1} \subseteq \mathcal{F}_i$, each $X_i$ is $\mathcal{F}_i$-measurable. Denote a partial sum of the process $(X_i)_{i\geq 2}$ by
\[
Y_n = X_2 + \frac{1}{2}X_3 + \cdots + \frac{1}{n-1}X_n.
\]
By definition, $\E[X_i | \mathcal{F}_{i-1}] =0$  for all $i \geq 2$. Since each $Y_{n-1}$ is $\mathcal{F}_{n-1}$-measurable, 
\[
\E[Y_n | \mathcal{F}_{n-1}] = \E[\sum_{i=2}^{n} \frac{1}{i-1} X_i | \mathcal{F}_{n-1}] = Y_{n-1} + \frac{1}{n-1}\E[X_n|\mathcal{F}_{n-1}] = Y_{n-1},
\]
and so the process $(Y_n)_{n}$ forms a martingale.

By Lemma \ref{lem:finitevar} and \eqref{eq:prop1}, both $\E[\mathcal{D}_i^2]$ and $\E[\mathcal{D}_i | \mathcal{F}_{i-1}]$ are uniformly bounded. Therefore, $\E[X_i^2]$ is also uniformly bounded for all $i \geq 2$. Furthermore, since $\E[X_i X_j] = 0$ for any $i \neq j$, it then follows that for all $n \geq 2$,
$$\E[Y_n^2] = \sum_{i=2}^n \frac{1}{(i-1)^2} \E[X_i^2] < \infty.$$ By the martingale convergence theorem, $Y_n$ converges almost surely. It then follows from Kronecker's lemma that\footnote{This result is also known as the strong law for martingales (See p.238,  \citet*[Theorem 2]{feller2008introduction}).}
\[
\lim_{n \to \infty} \frac{1}{n-1} (X_2+ \cdots X_n) = 0 \quad \text{almost surely.}
\]
Thus, by the definition of $X_i$, we can write
\begin{equation*} 
\lim_{n \to \infty}\frac{1}{n-1} \sum_{i=2}^{n}  \mathcal{D}_i = \lim_{n \to \infty} \frac{1}{n-1} \sum_{i=2}^{n}  \E[\mathcal{D}_i| \mathcal{F}_{i-1}] \quad \text{almost surely.}
\end{equation*}
It follows from \eqref{eq:prop1} that 
\begin{align}\label{eq:bound_prop1}
\lim_{n \to \infty}\frac{1}{n-1} \sum_{i=2}^{n}  \mathcal{D}_i \leq 1/\eps-\delta < 1/\eps \quad \text{almost surely.}
\end{align} 
Since $a_t = \text{sign}(\ell_{t+1})$ for all $t \geq 2$, 
\begin{align*}
\lim_{n \to \infty} \mathcal{Q}_{a}(n) &  =  \lim_{n \to \infty} \frac{1}{n}\sum_{t=1}^{n} \mathbbm{1}( a_t \neq a_{t+1}) = \lim_{n \to \infty} \frac{1}{n-1}\sum_{t=2}^{n} \mathbbm{1}(\text{sign}(\ell_{t+1}) \neq \text{sign}(\ell_{t+2})). 
\end{align*}
By the definitions of $\mathcal{T}_i$ and $\mathcal{D}_i$, 
\begin{align*} 
\frac{1}{n-1}\sum_{t=2}^{n} \mathbbm{1}(\text{sign}(\ell_{t+1}) \neq \text{sign}(\ell_{t+2})) =  
\frac{n-1}{\mathcal{T}_n - \mathcal{T}_1} = 
\frac{n-1}{\sum_{i=2}^{n} \mathcal{D}_i}.
\end{align*} 
Hence, we conclude from  \eqref{eq:bound_prop1} that  
\begin{align*}
   \lim_{n \to \infty}  \mathcal{Q}_{a}(n) = \lim_{n \to \infty} \frac{n-1}{\sum_{i=2}^{n} \mathcal{D}_i} > \eps = \lim_{n \to \infty}  \mathcal{Q}_{\theta}(n) \quad \text{almost surely.}
\end{align*} \qedhere
\end{proof}

\end{document}